\pgfplotsset{compat=1.18}
\numberwithin{figure}{section}
\newcommand{\readernote}[1]{\todo[size=\scriptsize]{#1}}
\DeclareMathOperator{\Exp}{\mathbb{E}}
\DeclareMathOperator{\Prob}{\mathbb{P}}
\DeclareMathOperator*{\argmax}{arg\,max}
\DeclareMathOperator*{\argmin}{arg\,min}
\newcommand{\bigtimes}{%
  \mathop{\scalerel*{\times}{\bigcap}}\displaylimits
}
\theoremstyle{definition}
\declaretheorem[name=Theorem]{thm}
\declaretheorem[name=Proposition, sibling=thm]{prop}
\declaretheorem[name=Lemma]{lem}
\declaretheorem[name=Claim]{claim}
\theoremstyle{definition}
\declaretheorem[name=Definition]{defn}
\numberwithin{equation}{section}
\crefname{thm}{Theorem}{Theorems}
\Crefname{thm}{Theorem}{Theorems}
\crefname{lem}{Lemma}{Lemmas}
\Crefname{lem}{Lemma}{Lemmas}
\crefname{prop}{Proposition}{Propositions}
\Crefname{prop}{Proposition}{Propositions}
\crefname{cor}{Corollary}{Corollaries}
\Crefname{cor}{Corollary}{Corollaries}
\crefname{claim}{Claim}{Claims}
\Crefname{claim}{Claim}{Claims}
\crefname{defn}{Definition}{Definitions}
\Crefname{defn}{Definition}{Definitions}
\crefname{cond}{Condition}{Conditions}
\Crefname{cond}{Condition}{Conditions}
\crefname{rem}{Remark}{Remarks}
\Crefname{rem}{Remark}{Remarks}
\newtheorem*{claim*}{Claim}
\newcommand{\commentout}[1]{}
\newcommand{\state}{\omega}
\newcommand{\supp}{\textrm{supp }}
\newcommand\crownset[1]{\ensurestackMath{\ThisStyle{%
  \setbox0=\hbox{$\ThisStyle#1$}%
  \stackengine{.7pt}{\SavedStyle#1}{\kern\dimexpr-.1\LMex+.25\ht0\relax%
  \includegraphics[width=1.2\LMex]%
  {crown}}%
  {O}{c}{F}{T}{S}}}}
\begin{document}

\vspace{15pt}

\title{Optimally Dictatorial Committees
\\
    }
\author{\href{https://www.carlosakkar.com}{D. Carlos Akkar}\footnote{Nuffield College and Department of Economics, Oxford. \href{mailto:akkarcarlos@gmail.com}{akkarcarlos@gmail.com}  \\
I thank César Barilla, Ian Jewitt, Ludmila Matysková, Margaret Meyer, Paula Onuchic, Manos Perdikakis, Daniel Quigley, and Ludvig Sinander for generous and helpful feedback.} 
}
\date{\today}
\maketitle
\readernote{COMMENTS ON!}

\vspace{-5pt}

\vspace{10pt}

\begin{abstract}
    \noindent I study the optimal voting mechanism for a committee that must decide whether to enact or block a policy of unknown benefit. Information can come both from committee members who can acquire it at cost, and a strategic lobbyist who wishes the policy to be enacted. I show that the \textit{dictatorship of the most-demanding member} is a \textit{dominant} voting mechanism: any other voting mechanism is (i) less likely to enact a good policy, (ii) more likely to enact a bad policy, and (iii) burdens each member with a greater cost of acquiring information. 
\end{abstract}

\newpage

\section{Introduction}
\label{section:intro}

\vspace{0.25cm}

\epigraph{\textit{[Democracy] is about making committees work.}}
         {Alan Bullock\footnotemark}
\footnotetext{\textcite{guardian_bullock}.}

\noindent
And a committee works only if it can harness information about the likely consequences of its decisions. This is a complex undertaking. The committee must first \textit{obtain} the information it needs. 
To do so, it must fight on two fronts: at home, it must motivate its members to \textit{acquire} information, even if this costs them time and effort. Outside, it must motivate external stakeholders to \textit{provide} information, even if this may harm their interests. The committee must then effectively \textit{aggregate} this information it has obtained: its decision must align with what this information warrants. How should such a  committee be designed, then, to best achieve these---possibly competing---objectives?

I study this question in a canonical and parsimonious setting.
There is a policy whose public benefit is unknown, and a committee wishes to decide whether to enact or block it. Committee members agree that a beneficial (\textit{good}) policy should be enacted, and that a harmful (\textit{bad}) one should be blocked. Nonetheless, each has different biases against the policy, so their preferred actions may differ while the policy's benefit remains uncertain. Outside the committee, there is a lobbyist. He prefers the policy to be enacted, whether it is good or bad. 

To influence the committee, the lobbyist can provide information about the policy's benefit. 
He does so by designing a public experiment about it, at no cost. Subsequently, committee members can acquire additional information about the policy's benefit. Each member does so privately, and at personal cost. Finally, committee members vote. They cast their votes simultaneously, and a pre-determined \textit{voting mechanism} prescribes (a) the set of votes each member may cast, and (b) the probability with which the policy will be enacted given the votes they cast.

I ask: is there a voting mechanism which secures
\vspace{-0.1cm}
\begin{enumerate}[itemsep=-1pt]
    \item the lowest expected cost of information acquisition for each member,
    \item lowest chance that a good policy is blocked,
    \item and the lowest chance that a bad policy is enacted
\end{enumerate}
\vspace{-0.1cm}
among \textit{all} possible voting mechanisms? Briefly put, is there a \textit{dominant} voting mechanism?

In Theorem \ref{thm:optimaldictatorship}, I prove that a dominant voting mechanism does indeed exist by characterising one: \textit{the dictatorship of the most-demanding member}. In it, only one member's vote influences the outcome. She is the \textit{most-demanding member}: the one who would obtain the most certainty that the policy is good before enacting it, were she deciding without a committee or lobbyist to interfere. 

There are a myriad of complex mechanisms. Why does this simple one dominate them all?

\vspace{0.12cm}
\noindent 1) In a dictatorship, no member expects to acquire any information.
\vspace{0.12cm}

Clearly, the dictator's peers have no incentive to acquire any information---their vote never affects the outcome. The dictator, in contrast, would wish to acquire information unless she were sufficiently certain about the policy's unknown benefit. Crucially, the lobbyist always provides enough information to deliver that certainty. He sees no harm in doing so: if he did not, the dictator would simply acquire that information herself. He reaps ample benefit from this, too---I discuss this next. Thus, the dictator receives all her information from the lobbyist, and simply votes accordingly.

\vspace{0.12cm}
\noindent 2) In a dictatorship, a good policy is never blocked.
\vspace{0.12cm}

A dictatorship incentivises the lobbyist to be the dictator's sole source of information---he provides enough information for the dictator to forgo acquiring any herself. The lobbyist leverages this position: he provides any additional information that raises the probability with which the policy is enacted. 
As such, he ensures that the dictator never blocks the policy unless she is certain that it is bad. Otherwise, he could provide her additional information that would either merely reinforce the dictator's belief that the policy should be blocked, or convince her that the policy is sufficiently likely to be good and turn her verdict around.

\vspace{0.12cm}
\noindent 3) In the dictatorship of the most-demanding member, a bad policy is least likely to be enacted.

To enact the policy without acquiring further information, a dictator must be sufficiently certain that it is good. That threshold level of certainty is highest for the most-demanding member: at her threshold, any dictator would enact the policy. I show that, in fact, \textit{any} committee would enact the policy at this threshold---no committee is more demanding than the most-demanding member. 
Intuitively, a committee member values information the most when she is a dictator; other voting mechanisms might anyway fail to incorporate her information into the final outcome. So, at a belief threshold where no dictator would acquire information before enacting the policy, no committee member would do so either before voting to enact it. 

As no committee is more demanding than the most-demanding member, the lobbyist can ensure that no committee is less likely to enact the policy than the most-demanding member's dictatorship. But the most-demanding member's dictatorship always enacts good policies; so, I conclude that no committee is less likely to enact a bad policy than the most-demanding member's dictatorship. 

\subsection*{Contribution and Literature}

I study how a committee must be designed in the face of its fourfold informational problem:
\vspace{-0.1cm}
\begin{enumerate}[itemsep=-1pt]
    \item it must incentivise external stakeholders to provide information relevant to its decision,
    \item it must incentivise its members to acquire such information, even if they find this costly,
    \item it must minimise the cost its members bear as they acquire such information,
    \item its final decision must efficiently aggregate the information it has accumulated.
\end{enumerate}

There is an extensive and multidisciplinary body of literature around this fourfold informational problem. However, existing literature addresses only a subset of these four problems at a time, ignoring the rest.
%
An earlier strand studies environments without external stakeholders to interfere with the committee's decision. This strand asks how a committee may aggregate its members' private information---both when information is held exogenously\footnote{
    See \textcite{condorcet_jurytheorem}'s celebrated Jury Theorem, and the seminal works of \textcite{austen-smith_banks_1996} and \textcite{feddersen_pesendorfer_96}, (
    \citeyear{feddersen_pesendorfer_1997}, and \citeyear{Feddersen_Pesendorfer_1998}.
}
and when it is acquired at personal cost\footnote{
    See \textcite{gersbach_1995}, \textcite{li_2001_conservatism}, \textcite{persico_2004_restud}, \textcite{smorodinsky_tennenholtz_2004} and (\citeyear{smorodinsky_tennenholtz_2006}), \textcite{gerardi_yariv_2008}, \textcite{gershkov_szentes_2009}, and \textcite{chan_lizzeri_suen_yariv_2018}. \textcite{stephenson_2011} offers a legal-theoretic discussion and treatment.
}.
A more recent strand, in turn, focuses on external stakeholders who may try to influence the committee through providing information, but mutes committee members' ability to acquire additional information.\footnote{
    See \textcite{breyer_82} and \textcite{coglianese_2004} for legal-theoretic discussions. See \textcite{alonso_camara_2016}, \textcite{bardhi_guo_2018}, \textcite{chanetal_2019}, and \textcite{heese_lauermann_2024} for contributions to Economics.}

However, the objectives these four problems impose on the committee are often in conflict. Consequently, so is the guidance offered by these two strands of the literature on optimal committee design.\footnote{
    For instance, both \textcite{persico_2004_restud} and \textcite{alonso_camara_2016} study the optimal $k$-majority rule in the (suitably adapted) canonical voting setting this paper examines, preceded by \textcite{austen-smith_banks_1996}. In \textcite{persico_2004_restud}, there are no external stakeholders to inform the committee, unlike \textcite{alonso_camara_2016}. In contrast, \textcite{alonso_camara_2016}'s voters (replacing the \textit{committee members} here) have no access to private information, unlike \textcite{persico_2004_restud}. Consequently, \textcite{persico_2004_restud} finds that large majority rules are often suboptimal, as they disincentivise committee members from acquiring private information. In contrast, \textcite{alonso_camara_2016} find that a unanimity rule is optimal when voters' preferences are sufficiently aligned, as this rule maximises the politician's (replacing \textit{the lobbyist} here) incentives to provide information to the voters. 
}
This is troubling for policymakers who may want to apply these lessons, since policymaking committees often face all four of these problems simultaneously.\footnote{
    \textcite{breyer_82} highlights this for agencies regulating public utilities: ``[An agency] will obtain the information, in part, through research by agency staff, as they consult research literature and talk to employees of other agencies. ... Once the Notice of Proposed Rulemaking is promulgated, however, ... information is then more likely to flow in the form of comments submitted to the agency by lawyers for parties.}

This paper addresses this gap by studying the optimal voting mechanism for a committee facing all four of these problems simultaneously. 
In Theorem \ref{thm:optimaldictatorship}, I find that a dictatorship {dominates} any other voting mechanism. This is robust to committee members' information acquisition costs and biases towards the policy; these characteristics only determine \textit{who} the dictator should be. This robustness contrasts with the earlier literature on information acquisition in committees, which generally finds that the optimal mechanism may depend greatly on committee members' characteristics.\footnote{
    For instance, \textcite{persico_2004_restud} finds that the optimal majority rule depends on the precision of the information available to committee members. \textcite{gerardi_yariv_2008} find that the optimal size of a committee depends on its members' cost of acquiring information, and how precise that information is. \textcite{gershkov_szentes_2009} find that the whether the optimal sequential voting rule is ex-post efficient similarly depends on agents' (replacing the \textit{committee members} here) cost of acquiring information. 
}

Among papers in the aforementioned literature, Theorem \ref{thm:optimaldictatorship} aligns most closely with \textcite{alonso_camara_2016} and \textcite{bardhi_guo_2018}. In both, a politician, or Sender (replacing \textit{the lobbyist} here), 
provides information to persuade a committee whose members have no access to private information. Both find that, when members' preferences align to the extent they do in this paper,\footnote{
    Both consider a more general class of preferences for members than this paper. \textcite{alonso_camara_2016} find this result when members agree on which states make enacting the policy more desirable, and in which states it should be enacted. \textcite{bardhi_guo_2018} find it when members' preference states are perfectly correlated. 
} each member prefers the unanimity rule to any other $k$-majority rule.\footnote{
    \textcite{bardhi_guo_2018} study both the case where information provision is \textit{general}---the Sender publicly communicates all information---and the case where it is \textit{private}---the Sender can provide different information to different members. They find this for the former, but not necessarily the latter case.
}
This is because under unanimity, the \textit{most-demanding member}---there, the one with the strongest bias against the policy---becomes the pivotal voter. 
The politician, or Sender, must then persuade her to enact the policy. This compels him to provide more information to the committee than he would under any other $k$-majority rule. 

I also find that under the dictatorship of the most-demanding member, the lobbyist is incentivised to provide more information to the committee than under any other voting mechanism (Proposition \ref{prop:dict_mostinfo}). However, this only provides a partial intuition for Theorem \ref{thm:optimaldictatorship}:
even if such a voting mechanism extracts less information from the lobbyist, it may still provide strong information acquisition incentives to committee members. As such, such a voting mechanism may still lead to better outcomes than the dictatorship of the most-demanding member---in principle. 

What drives Theorem \ref{thm:optimaldictatorship} is that the lobbyist \textit{always} prevents a voting mechanism from outperforming the dictatorship of the most-demanding member as such. When the most-demanding member is the dictator, the information she receives from the lobbyist dissuades her from acquiring more herself. In proving Theorem \ref{thm:optimaldictatorship}, I establish that the same amount of information dissuades \textit{any} committee member from acquiring information, regardless of the voting mechanism. I then show that the lobbyist provides less than this amount of information only when this persuades the committee to enact bad policies more often---potentially, even at the cost of enacting good policies less often.

\textcite{matyskova_montes_jet_2023} study a model akin to mine, but where the Sender (here, the \textit{lobbyist}) attempts to persuade a single decision-maker (here, a \textit{committee member}) rather than a committee. Their main result finds that it is without loss to restrict to equilibria where the Sender gives the Receiver enough information to dissuade her from acquiring any herself. My Lemma \ref{lem:dictatoreqm}, which finds that the lobbyist always provides enough information to disincentivise the dictator from acquiring any herself, can be viewed as a manifestation of their main result. 

Lastly, it is worth remarking that this paper only studies \textit{simultaneous} voting mechanisms. Some authors\footnote{
    For instance, \textcite{gershkov_szentes_2009}, and \textcite{chan_lizzeri_suen_yariv_2018}.
}
have also studied what the optimal \textit{sequential} voting mechanism is for a committee. Investigating this in settings where the committee must incentivise both information acquisition by its members and information provision by a strategic lobbyist remains a promising area of research.

\subsection*{Outline}

The remainder of the paper is organised as follows. Section \ref{section:model} introduces the model. Section \ref{section:dictatorships} introduces dictatorships and presents Lemma \ref{lem:dictatoreqm}, which discusses the equilibria dictatorships induce. Section \ref{section:optimality} presents and proves Theorem \ref{thm:optimaldictatorship}, the main result. It also presents Proposition \ref{prop:dict_mostinfo}, which proves that no voting mechanism incentivises the lobbyist to provide more information than he does under the dictatorship of the most-demanding member.
Section \ref{section:discussion} concludes by discussing 
how sensitive the results of the paper are to certain modelling assumptions.

\section{The Model}
\label{section:model}

A committee of $N$ \textit{members}, $m_1, ..., m_N$, wish to decide whether to \textit{enact} or \textit{block} a policy. 
The benefit member $m_i$ obtains from enacting the policy depends on an unknown state of the world, $\state \in \left\{  \state_0, \state_1 \right\}$. When $\state = \state_1$---the policy is \textit{good}---she obtains a payoff of $u_i \geq 0$ from enacting it. When $\state = \state_0$---the policy is \textit{bad}---she incurs a payoff loss of $1$ instead. Blocking the policy, in contrast, yields her a payoff of $0$ in either state. There is also a \textit{lobbyist} outside the committee. He prefers the policy to be enacted regardless of the state. 

\sloppy At the outset of the game, the lobbyist and committee members share the prior belief that $\state = \state_1$ with probability $p \in (0,1)$. The lobbyist can then provide additional information about the state. He does so by committing to perform and publicly disclose the outcome of an (Blackwell) experiment of his choosing. Formally, an experiment is comprised of a measurable space  $\left( X, \mathcal{X} \right)$ and, for each possible state $\state_k$, a probability measure $P_k$ over this space. Upon observing the outcome of the lobbyist's experiment, players use Bayes Rule to update their prior belief $p$ to an \textit{interim belief} $r \in [0,1]$.

I follow a \textit{belief-based approach}\footnote{
    See, for instance, \textcite{kg2011}, and \textcite{CaplinDean2013}.
} and model the lobbyist as directly choosing the probability measure $\mu_l$ his experiment will induce over the space of interim beliefs, $[0,1]$. He can choose any such measure, provided it is \textit{Bayes plausible}; i.e. $\int\limits_{[0,1]} r \mu_l (d r) = p$. I denote the set of such measures as $\Delta_p ( [0,1] )$.

After the lobbyist reveals the outcome of his experiment, each committee member may acquire additional information about the state. She may do so by choosing an additional experiment whose outcome she observes. Information acquisition by committee members is \textit{private}; neither $m_i$'s chosen experiment nor its outcome are disclosed to anyone.

Following the outcome of her experiment, $m_i$ updates her interim belief $r$ to a \textit{posterior belief} $q_i \in [0,1]$ that $\state = \state_1$. Again, I model each member $m_i$ as choosing the probability measure $\mu_{i} ( . \mid r) \in \Delta_r ([0,1])$ her experiment will induce over her posterior beliefs, given her interim belief $r$. 
To simplify exposition, I assume that the outcomes of members' chosen experiments are independent conditional on the true state $\state$.

Producing and providing public information is costless for the lobbyist. In contrast, acquiring private information is costly for committee members: $m_i$ must pay a cost $C_i (\mu)$ to observe the outcome of an experiment which induces the distribution $\mu$ over her possible posterior beliefs. The cost function $C_i$ maps $\Delta ( [0,1] )$, the set of all probability measures over $[0,1]$, to $\mathbb{R}^+$. It is \textit{uniformly posterior separable};\footnote{See, for instance, \textcite{caplin_dean_leahy_2017}, and \textcite{denti_2022}.} i.e., there is a convex function $c_i: [0,1] \to \mathbb{R}^+$ such that:
\begin{equation*}
    C_i (\mu) = \int\limits_{[0,1]} c_i ( q ) d \mu(q) - c_i \left( \Exp_{\mu} [q] \right)
\end{equation*}

Finally, each committee member casts a private vote into a \textit{voting mechanism}. A voting mechanism is a tuple $\left( D, \left\{ V_i \right\}_{i=1}^{N}, \left\{ \mathcal{V}_i \right\}_{i=1}^{N} \right)$ where $V_i \ni v_i$ is a set of possible votes each member $m_i$ may cast, $\mathcal{V}_i$ is a $\sigma$-algebra over $V_i$, and the \textit{decision rule} $D: \bigtimes\limits_{i=1}^{N} V_i \to [0,1]$ is a measurable mapping from the space of possible vote profiles to the probability that the policy is enacted. I require that:
\begin{enumerate}[itemsep=-5pt]
    \item given others' votes, member $m_i$ have a \textit{lowest} and \textit{highest} vote to submit:\footnote{
        This requirement is automatically satisfied when the set of possible vote profiles $\bigtimes\limits_{i=1}^{N} V_i$ is finite, or when $\bigtimes\limits_{i=1}^{N} V_i$ is compact and the decision rule $D(\cdot)$ is continuous. 
    }
    \vspace{-0.25cm}
    \begin{equation*}
        \argmin\limits_{v_i \in V_i} D (v_i, \mathbf{v_{-i}}) \text{ and } \argmax\limits_{v_i \in V_i} D (v_i, \mathbf{v_{-i}}) \text{ exist for each } \mathbf{v_{-i}} \in \mathbf{V_{-i}} := \bigtimes\limits_{j \neq i} {V_j}
    \end{equation*}
    \vspace{-0.5cm}
    \item there be a profile of votes upon which the policy is almost surely enacted (blocked):
    \vspace{-0.25cm}
    \begin{equation*}
        \exists \text{ } \mathbf{\underaccent{\bar}{v}}, \mathbf{\Bar{v}} \in \bigtimes\limits_{i=1}^{N} V_i \text{ such that } D ( \mathbf{\underaccent{\bar}{v}}  ) = 0 \text{ and } D ( \Bar{\mathbf{v}} ) = 1
    \end{equation*}
\end{enumerate}

For any given voting mechanism, I focus on the lobbyist-preferred perfect Bayesian equilibrium of this game. A perfect Bayesian equilibrium consists of:
\begin{enumerate}
    \item a voting strategy $\left\{ \tau_{i}^* (. \mid q,r) \right\}_{(q,r) \in [0,1]^2}$ for each member $m_i$, where $\tau_{i}^* (. \mid q,r) \in \Delta ( V_i )$,
    \item an information strategy $\left\{ \mu_{i}^* (. \mid r) \right\}_{r \in [0,1]}$ for each member $m_i$, where $\mu_{i}^* (. \mid r) \in \Delta_{r} \left(  [0,1] \right)$,
    \item a strategy $\mu_l^* \in \Delta_p \left( [0,1] \right)$ for the lobbyist,
\end{enumerate}
such that:
\begin{enumerate}
    \item Given other members' strategies, member $m_i$'s voting strategy
    maximises her expected payoff at any pair of interim and posterior beliefs $(r,q) \in [0,1]^2$ she may hold:
    \begin{equation*}
        \begin{split}
            &v_i \in \operatorname*{supp } \tau_{i}^* ( . \mid q,r ) \\
            \implies &v_i \in \operatorname*{argmax}_{v \in V_i} \int\limits_{[0,1]^{N-1}} U_i ( r, q_i, \mathbf{q_{-i}} ) \times \left\{ \int\limits_{V_{-i}} D (v, \mathbf{v_{-i}}) \tau^*_{-i} (d \mathbf{v_{-i}} \mid r, \mathbf{q_{-i}})  \right\} \mathbf{\mu_{-i}^*} \left( d \mathbf{q_{-i}} \mid r, q_i \right)
        \end{split}
    \end{equation*}
    where: 
    \begin{enumerate}[label=\roman*.]
        \item $\mathbf{\mu_{-i}^*} \left( \cdot \mid r, q_i \right)$ is the probability measure $m_i$ assigns over the posterior beliefs of all members excluding herself, conditional on her interim and posterior beliefs $r$ and $q_i$, and those members' information strategies.\footnote{
            This measure can easily be derived using Bayes' Rule.
        }
        \item $\tau^*_{-i} ( \cdot \mid r,\mathbf{q}_{-i}) := \bigtimes\limits_{j \neq i} \tau^*_{j} (. \mid r, q_j) $ is the probability measure 
        over the votes of all members excluding $m_i$, conditional on their interim and posterior beliefs $r$ and $\mathbf{q}_{-i}$, and their voting strategies.
        \item $U_i ( r, q_i, \mathbf{q_{-i}} )$ denotes the payoff $m_i$ expects from enacting the policy given all players' information about the state:
        \begin{equation*}
            U_i ( r, q_i, \mathbf{q_{-i}} ):=  \Prob_{r, q_1, \dots q_N} (\state_1) \times (u_i + 1) -1
        \end{equation*}
        Here, $\Prob_{r, q_1, \dots q_N} (\state_1)$ denotes the probability that $\state = \state_1$ given the same information:
        \[
            \Prob_{r,q_1,...,q_N} (\state_1) := \left( \frac{r}{1-r} \times \prod\limits_{j=1}^{N} \frac{q_j/(1-q_j)}{r/(1-r)} \right) \times \left( 1 + \frac{r}{1-r} \times \prod\limits_{j=1}^{N} \frac{q_j/(1-q_j)}{r/(1-r)} \right)^{-1}
        \]

        \end{enumerate}

    \item Given others' strategies, member $m_i$'s information strategy maximises her expected payoff at any interim belief $r$:
    \begin{equation*}
        \mu_i^*( \cdot \mid r ) \in \argmax\limits_{\mu \in \Delta_r ([0,1])} \int \pi_i^* ( r, q ) \mu ( d q ) - C_i (\mu)
    \end{equation*}
    where $\pi_i^* ( r, q )$ is the payoff $m_i$ expects in the ensuing subgame, conditional on her interim belief $r$, posterior belief $q$, and ensuing voting strategy,
    \item The lobbyist's information strategy $\mu_l^*$ maximises the probability that the policy is enacted given committee members' information and voting strategies.
    \end{enumerate}
    
    The lobbyist-preferred perfect Bayesian equilibrium (hereon, simply \textit{equilibrium}) is the perfect Bayesian equilibrium where the probability that the committee enacts the policy is highest. 

\subsection*{Dominant Voting Mechanisms}

A voting mechanism is \textit{dominant} if it gives rise to an equilibrium that yields:
\begin{enumerate}[label=\roman*.]
    \item a (weakly) higher probability that the policy is blocked when $\state = \state_0$,
    \item a (weakly) higher probability that the policy is enacted when $\state = \state_1$, and
    \item a (weakly) lower expected cost of information acquisition for each committee member
\end{enumerate}
compared to any equilibrium that any other voting mechanism yields. Every committee member (weakly) prefers a dominant voting mechanism over any other voting mechanism.

\section{Dictatorships}
\label{section:dictatorships}

One of the simplest voting mechanisms the committee may adopt is a \textit{dictatorship}. In a dictatorship, only one member's vote influences the outcome. 

\begin{defn}
    A voting mechanism $( \crownset{D_i}, \left\{ {V}_j \right\}_{j=1}^{N}, \left\{ \mathcal{V}_j \right\}_{j=1}^{N} )$ is an $m_i$\textit{-dictatorship} if:
    \vspace{-0.25cm}
    \[
        \crownset{D_i}(v_i,\mathbf v_{-i})
            = \crownset{D_i}(v_i,\mathbf v'_{-i})
            \quad \text{for any }
            v_i\in V_i \text{ and all } \mathbf v_{-i},\mathbf v'_{-i}\in \mathbf{V_{-i}}
    \]
\end{defn}

Lemma \ref{lem:dictatoreqm} characterises the equilibria dictatorships induce. 

\begin{lem}
    \label{lem:dictatoreqm}
    In the equilibrium of an $m_i$-dictatorship:
    \begin{enumerate}
        \item $m_i$'s voting strategy satisfies:
        \[
        \Exp_{ \tau_i^* ( . \mid r,q ) } \crownset{D_i} = 
            \begin{cases}
                0 & \text{for } q < ( 1 + u_i )^{-1} 
                \\
                1 & \text{otherwise}
            \end{cases}
        \]
        for any interim belief $r \in [0,1]$.
        \item $m_i$'s information acquisition strategy $\left\{ \mu_i^* (. \mid r) \right\}_{r \in [0,1]}$ is the unique one such that for every $r \in [0,1]$, $\mu_{i}^* ( \cdot \mid r) \in \Delta_r ([0,1])$, and:
        \vspace{-0.25cm}
        \[
            \supp \mu_{i}^* ( \cdot \mid r) = 
            \begin{cases}
                \left\{ \underaccent{\bar}{q}_i^*, \Bar{q_i}^* \right\} & r \in \left( \underaccent{\bar}{q}_i^*, \Bar{q_i}^* \right) \\
                \{ r \} & \text{otherwise}
            \end{cases}
        \]
        %
        where $\underaccent{\bar}{q}_i^*$ and $\Bar{q_i}^*$ are two threshold beliefs such that
        $\underaccent{\bar}{q}_i^* \leq (1 + u_i)^{-1} \leq  \Bar{q_i}^*$.
        \item The lobbyist's information strategy $\mu_l^*$ is the unique probability measure satisfying:
        \vspace{-0.25cm}
        \[
            \supp \mu_l^* =
            \begin{cases}
                \left\{ 0, \Bar{q_i}^* \right\} & p < \Bar{q_i}^* \\
                \{ p \} & \text{otherwise}
            \end{cases}
        \]
        \item Members except $m_i$ never incur a positive information acquisition cost; i.e., $C_j \left( \mu_j^* (. \mid r) \right) = 0 \quad \text{for every } r \in [0,1]$.
    \end{enumerate}
\end{lem}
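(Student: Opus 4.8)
The plan is to solve the game by backward induction, reading each of the four statements as equilibrium behaviour at a successive stage, and to exploit twice the fact that a uniformly posterior separable cost turns an information choice into a concavification problem.

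\medskip
\noindent\textbf{Voting and the non-dictators (statements 1 and 4).} First I would pin down $m_i$'s vote. Since only her vote moves the decision rule, her voting problem is that of a single decision-maker: given posterior $q$ she enacts iff her expected payoff from enactment is nonnegative. The one thing to check is that, although the displayed optimality condition integrates the pooled-belief payoff $U_i(r,q_i,\mathbf q_{-i})$ against the other members' posteriors, this integral collapses to $q(1+u_i)-1$. That is exactly the martingale (tower) property of Bayesian beliefs: averaging the fully pooled probability of $\state_1$ over the signals $m_i$ has not observed returns her own posterior $q$, whatever the others do, so no circular appeal to statement 4 is needed. Hence she enacts iff $q\ge (1+u_i)^{-1}$, with ties broken toward enactment as the lobbyist-preferred selection dictates, which is statement 1; her continuation value is the convex, piecewise-linear $\pi_i^*(r,q)=\max\{0,\,q(1+u_i)-1\}$, flat at $0$ up to $t_i:=(1+u_i)^{-1}$ and rising with slope $1+u_i$ thereafter, independent of $r$. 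The same tower-property computation shows that for $j\neq i$ the continuation payoff $\pi_j^*(r,\cdot)$ is \emph{affine} in $m_j$'s posterior: her vote cannot change the outcome, so her expected payoff is linear in her belief about the state. An affine continuation yields no gross benefit from a mean-preserving spread while $C_j\ge 0$, so no non-dictator ever acquires information at positive cost, giving statement 4.

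\medskip
\noindent\textbf{The dictator's information choice (statement 2), the hard part.} Next I would solve $m_i$'s acquisition problem. Because the cost is uniformly posterior separable and $\mathbb{E}_\mu[q]=r$, her objective $\int \pi_i^*(r,q)\,\mu(dq)-C_i(\mu)$ equals $\int g_i\,d\mu + c_i(r)$ with $g_i:=\pi_i^*-c_i$ viewed as a function of $q$, so the optimal value is the concave closure $\mathrm{cav}\,g_i$ evaluated at $r$. The structural observation is that $g_i$ is concave on each side of $t_i$---on $[0,t_i]$ it is $-c_i$, and on $[t_i,1]$ it is an affine function minus $c_i$---but has an \emph{upward} kink at $t_i$, where the slope jumps by $1+u_i$. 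The only failure of concavity is this single convex kink, so $\mathrm{cav}\,g_i$ irons it out with one bitangent chord touching $g_i$ at a left point $\underaccent{\bar}{q}_i^*$ and a right point $\Bar{q_i}^*$; since each piece is individually concave, any ironing must bridge the kink, forcing $\underaccent{\bar}{q}_i^*\le t_i\le \Bar{q_i}^*$. For $r$ outside $[\underaccent{\bar}{q}_i^*,\Bar{q_i}^*]$ the closure equals $g_i$ and no information is acquired; for $r$ strictly inside, the support of the optimal belief distribution is exactly the two contact points $\{\underaccent{\bar}{q}_i^*,\Bar{q_i}^*\}$, with weights fixed by the mean constraint. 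I expect the main obstacle to lie here: verifying that the concave closure is a \emph{single} chord rather than several (which rests on the two-concave-pieces-with-one-upward-kink geometry) and establishing uniqueness, which requires $g_i$ to lie strictly below the chord on the open interval and to be strictly concave just outside it. The latter leans on strict convexity of $c_i$, so I would make that regularity explicit.

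\medskip
\noindent\textbf{The lobbyist (statement 3).} Finally I would solve the lobbyist's problem, which decouples from the member's: the thresholds $\underaccent{\bar}{q}_i^*,\Bar{q_i}^*$ are determined by $m_i$'s problem alone, and the lobbyist merely best-responds by choosing $\mu_l\in\Delta_p([0,1])$. Composing statements 1 and 2, the probability the policy is enacted as a function of the interim belief $r$, call it $\Phi_i(r)$, equals $0$ on $[0,\underaccent{\bar}{q}_i^*]$, rises linearly to $1$ across $[\underaccent{\bar}{q}_i^*,\Bar{q_i}^*]$, and equals $1$ on $[\Bar{q_i}^*,1]$. The lobbyist maximises $\int \Phi_i\,d\mu_l$, again a concavification, now of $\Phi_i$ at $p$. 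As $\Phi_i$ is convex up to $\Bar{q_i}^*$ and constant thereafter, $\mathrm{cav}\,\Phi_i$ is the chord from $(0,0)$ to $(\Bar{q_i}^*,1)$ followed by the constant $1$, whose only contact points below $\Bar{q_i}^*$ are $0$ and $\Bar{q_i}^*$. Hence when $p<\Bar{q_i}^*$ the optimal disclosure is supported on $\{0,\Bar{q_i}^*\}$ with weight $p/\Bar{q_i}^*$ on $\Bar{q_i}^*$, while when $p\ge \Bar{q_i}^*$ the policy is already enacted with probability one and $\delta_p$ is optimal, which is statement 3. I would stress the economic content: the lobbyist pushes the induced belief all the way up to $\Bar{q_i}^*$, not merely to the voting threshold $t_i$, precisely because at any belief in $(\underaccent{\bar}{q}_i^*,\Bar{q_i}^*)$ the dictator would acquire her own information and might then block; $\Bar{q_i}^*$ is the least favourable belief at which she is content not to, so disclosing it both dissuades her acquisition and secures her enacting vote.
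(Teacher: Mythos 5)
Your proposal is correct and follows essentially the same route as the paper: backward induction, with the dictator's vote pinned down by the martingale property, the non-dictators' affine continuation values killing any value of information, and both the dictator's acquisition problem and the lobbyist's disclosure problem solved by concavification of the relevant net-payoff functions. The only difference is presentational: where you derive the two-threshold, single-bitangent-chord structure of the dictator's optimal strategy directly from the geometry of $\pi_i^* - c_i$, the paper obtains the same structure by citing the Locally Posterior Invariant property from Caplin, Dean and Leahy (2022).
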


In an $m_i$-dictatorship, $m_i$ has exclusive control over the decision.
Thus, she votes to ensure that the decision maximises her expected payoff: the policy is enacted whenever she expects a weakly positive payoff from doing so, and it is blocked otherwise. 

$m_i$'s information strategy is fully characterised by her two \textit{thresholds of persuasion}, $\underaccent{\bar}{q}_i^*$ and $\Bar{q_i^*}$. $m_i$ acquires additional information about the state if and only if her interim belief $r$ falls between these thresholds. When she does, the information she acquires moves her posterior belief precisely to one of these two thresholds. Importantly, $m_i$'s thresholds of persuasion always bracket her \textit{threshold of indifference} $(1+u_i)^{-1}$; i.e., the belief which leaves her indifferent between enacting and blocking the policy. This reflects that $m_i$ never acquires a costly experiment whose outcome always leads to the same vote, and as such, has no value for her decision. In contrast, $m_i$ acquires no further information when her interim belief lies beyond these thresholds, $r \notin ( \underaccent{\bar}{q}_i^*, \bar{q_i}^* )$. Instead, she simply takes whatever decision her interim belief favours. Her peers, having no influence over the outcome, never acquire costly information about the state.

In turn, the lobbyist wishes to maximise the probability that the committee enacts the policy. In an $m_i$-dictatorship, this amounts to maximising the probability that $m_i$'s posterior belief weakly exceeds her threshold of indifference, $(1+u_i)^{-1}$, so that she votes for enactment. Owing to $m_i$'s simple information strategy, the lobbyist can restrict himself to interim beliefs beyond $m_i$'s thresholds of persuasion, $r \notin \left( \underaccent{\bar}{q}_i^*, \Bar{q_i^*} \right)$. Interim beliefs between these thresholds merely prompt $m_i$ to distribute her posterior beliefs back to her thresholds of persuasion. In contrast, those outside that interval prompt her to vote without acquiring further information.\footnote{
    \textcite{matyskova_montes_jet_2023} establish a more general version of this argument. A Sender (replacing \textit{the lobbyist} here) who faces a single Receiver (replacing \textit{the dictator} here) with a uniformly posterior separable information acquisition cost suffers no loss from restricting himself to interim beliefs where the Receiver acquires no further information.}

This reduces the lobbyist's problem to a simple Bayesian Persuasion exercise a là \textcite{kg2011}. When $m_i$'s prior belief already exceeds her upper threshold of persuasion $\Bar{q_i^*}$, the lobbyist provides no further information. Then, $m_i$ simply enacts the policy without acquiring any additional information. Otherwise, the lobbyist's experiment either induces the interim belief $0$, or $\Bar{q_i^*}$. The latter precisely equals $m_i$'s upper threshold of persuasion: $m_i$ is persuaded to enact the policy, but with no more evidence than necessary. The former, on the other hand, conclusively reveals that the policy is bad, persuading $m_i$ to block the policy. Only a bad policy can induce this interim belief, ensuring that $m_i$ never blocks a good policy. 

\needspace{3\baselineskip} 
\begin{proof}[Proof, Lemma 1:]
    \nopagebreak
    \leavevmode\vspace{0.15em}
    \nopagebreak
    \begin{enumerate}
        \item \sloppy Since only $m_i$''s vote affects the outcome, we can define the measurable function $\crownset{f_i} : V_i \to [0,1]$ to be $\crownset{f_i} (v_i) := \crownset{D_i} (v_i, \mathbf{v_{-i}})$ for an arbitrary $\mathbf{v_{-i}} \in V_{-i}$. Then, $v_i \in \supp \tau_i^* ( \cdot \mid r,q )$ only if:
        \begin{align*}
                &v_i \in \argmax_{v \in V_i} \int\limits_{[0,1]^{N-1}} U_i (r, q_i, \mathbf{q_{-i}}) \times \left\{ \int\limits_{V_{-i}} \crownset{f_i} (v_i) \tau^*_{-i} (d \mathbf{v_{-i}} \mid r, \mathbf{q_{-i}})  \right\} \mathbf{\mu_{-i}^*} \left( d \mathbf{q_{-i}} \mid r, q_i \right)
                \\
                \implies & v_i \in \argmax_{v \in V_i} \crownset{f_i} (v_i) \times \int\limits_{[0,1]^{N-1}} U_i (r, q_i, \mathbf{q_{-i}}) \mathbf{\mu_{-i}^*} \left( d \mathbf{q_{-i}} \mid r, q_i \right)
                \\
                \implies & v_i \in \argmax_{v \in V_i} \crownset{f_i} (v_i) \times \left[ q_i \times ( 1 + u_i ) -1 \right]
        \end{align*}
        where the last implication follows from the martingale property of beliefs; i.e., $\Exp_{\mathbf{q_{-i}}} [\Prob_{r,q_1,...,q_N} \mid q_i ] = q_i$. 
        
        Thus, $ \{ \tau_i^*( \cdot \mid r,q ) \}_{(r,q) \in [0,1]^2} $ is an equilibrium strategy only if:
        \begin{equation*}
            \Exp_{\tau_i^* ( \cdot \mid r,q )} \crownset{f_i} (.) =
            \begin{cases}
                0 & q < (1 + u_i)^{-1} \\
                1 & q > (1 + u_i)^{-1}
            \end{cases}
        \end{equation*}
        When $q = (1+u_i)^{-1}$, $m_i$ is indifferent between enacting and blocking the policy, hence any vote. He must then submit the lobbyist-preferred vote; i.e., $\Exp_{\tau_i^* ( \cdot \mid r, (1 + u_i)^{-1} )} \crownset{f_i} (.) = 1$.

        \item  $m_i$ only needs to decide whether to vote to maximise or minimise the enactment probability of the policy. So, she can restrict to information strategies where $\lvert \supp \mu_{i}^* ( \cdot \mid r )  \rvert \leq 2$, without loss of generality.
        Furthermore, $m_i$ can always restrict herself to either acquiring no information, or acquiring information that has decision-making value; i.e., $(1 + u_i)^{-1} \in \text{conv} \left( \supp \mu_{i}^* ( \cdot \mid r ) \right)$ whenever $\lvert \supp \mu_{i}^* ( \cdot \mid r )  \rvert = 2$. Lastly, note that Theorem 1 in \textcite{caplin_dean_leahy_2022} implies that $m_i$'s strategy must be \textit{Locally Posterior Invariant} when $C_i$ is uniformly posterior separable; i.e., for any $\Tilde{r} \in \text{conv} \left( \supp \mu_{i}^* ( \cdot \mid r ) \right)$, $\mu^*_{i} ( \cdot \mid \Tilde{r} )$ is the unique distribution such that $\supp \mu_{i}^* ( \cdot \mid \Tilde{r} ) = \supp \mu_{i}^* (\cdot \mid r)$.

    Letting $ \left\{ \underaccent{\bar}{q}_{i}^*, \Bar{q}_i^* \right\} := \supp \mu_{i}^* (\cdot \mid (1 + u_i)^{-1})$ and $\underaccent{\bar}{q}_i^* \leq \Bar{q}_i^*$, we thus conclude:
    \begin{enumerate}
        \item for all $r \in \left[ \underaccent{\bar}{q}_{i}^*, \Bar{q}_i^* \right]$, $\mu_{i}^* (\cdot \mid r)$ is the unique distribution such that $\text{supp} \mu_{i}^* (\cdot \mid r) = \left\{ \underaccent{\bar}{q}_{i}^*, \Bar{q}_i^* \right\}$.
        \item for all $r \notin \left[ \underaccent{\bar}{q}_{i}^*, \Bar{q}_i^* \right]$, $\mu_{i;r}^* = \delta_r$. 
    \end{enumerate}
    \item Given members' voting and information strategies, let $\zeta^*_i (r)$ denote the probability that the policy is enacted in an $m_i$-dictatorship, following an interim belief of $r$. Denote the concave envelope of this function as $\hat{\zeta_i}^*(r)$. These two functions are given by:
    \begin{align*}
        \zeta^*_i (r) &:= 
        \begin{cases}
            0 & r \leq \underaccent{\bar}{q}_i^* \\
            \frac{ r - \underaccent{\bar}{q}_i^* }{ \Bar{q}_i^* - \underaccent{\bar}{q}_i^* } & r \in \left[ \underaccent{\bar}{q}_i^*, \Bar{q}_i^* \right] \\
            1 & r \geq \Bar{q}_i^*
        \end{cases}
        &
        \hat{\zeta_i}^*(r) &:= 
        \begin{cases}
            \frac{r}{\Bar{q}_i^*} & r \leq \Bar{q}_i^* \\
            1 & r \geq \Bar{q}_i^*
        \end{cases}
    \end{align*}
    %

    That $\hat{\zeta_{i}}^*(r)$ is indeed the concave envelope of $\zeta_{i}^* (r)$ can be easily verified: any concave function that coincides with $\zeta_{i}^*$ at $r = 0$ and $r = \Bar{q}_i^*$ must weakly lie above $\hat{\zeta_{i}}^*(r)$ on the interval $[0, \Bar{q}_i^*]$.

    The condition $\Exp_{\mu} \zeta_{i}^* (.) = \hat{ \zeta_{i} }^* (p)$
    is then sufficient for the optimality of $\mu$ for the lobbyist (\textcite{kg2011}).
    It is easily seen that the distribution $\mu_l^*$ defined in Lemma \ref{lem:dictatoreqm} satisfies this condition. 
    
    \end{enumerate}
    
\end{proof}

\section{A Dominant Dictatorship}
\label{section:optimality}

The equilibrium of an $m_i$-dictatorship is shaped by the dictator's upper threshold of persuasion, $\Bar{q_i}^*$. In this equilibrium, no committee member acquires any information; the committee relies entirely on the information provided by the lobbyist. In turn, the lobbyist ensures that the committee blocks the policy only when it is conclusively revealed to be bad. In contrast, he never \textit{over-persuades} $m_i$, the dictator, to enact the policy; $m_i$'s interim belief never \textit{strictly} exceeds her upper threshold of persuasion, $\Bar{q_i}^*$, unless her prior belief already does. 

So, the higher a dictator's upper threshold of persuasion is, the more information she extracts from the lobbyist. I call the member for whom this threshold is highest the \textit{most-demanding member} of the committee. 

\begin{defn}
    Member $m_i$ is the \textit{most-demanding member} of the committee, denoted $\crownset{m}$, if $\Bar{q}_{i}^* = \max \left\{ \Bar{q}_1^*, \Bar{q}_2^*, ..., \Bar{q}_N^* \right\} =: {\crownset{q}}$. 
\end{defn}

The dictatorship of the most-demanding member clearly dominates the dictatorship of any other member---no dictatorship results in costly information acquisition or a good policy getting blocked, but the $\crownset{m}$-dictatorship minimises the probability of a bad policy getting enacted. Theorem \ref{thm:optimaldictatorship} shows that, in fact, the $\crownset{m}$-dictatorship dominates \textit{every} voting mechanism.

\begin{thm}
    \label{thm:optimaldictatorship}
    The dictatorship of the most-demanding member is a dominant voting mechanism.
\end{thm}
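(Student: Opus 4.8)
The plan is to collapse the three dominance requirements into a single comparison of \emph{total} (unconditional) enactment probabilities, and then to bound from below the enactment probability the lobbyist can guarantee in an arbitrary voting mechanism by that of the $\crownset{m}$-dictatorship. Requirements (ii) and (iii) are immediate from Lemma \ref{lem:dictatoreqm}: in the $\crownset{m}$-dictatorship no member acquires information, so every member's expected acquisition cost is $0$, the smallest value possible, giving (iii); and the lobbyist's experiment concentrates interim-belief mass on $0$ and $\crownset{q}$ (or on $p$ alone), where the belief $0$ occurs only in state $\state_0$, so conditional on $\state_1$ the interim belief is $\crownset{q}$ almost surely and the policy is enacted with probability $1$, giving (ii). For (i), write $\mathrm{En}_k$ for the enactment probability conditional on $\state_k$. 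Since the $\crownset{m}$-dictatorship has $\mathrm{En}_1=1$ while any mechanism has $\mathrm{En}_1\le 1$, a one-line rearrangement of $p\,\mathrm{En}_1+(1-p)\,\mathrm{En}_0=(\text{total enactment})$ shows that (i), namely $\mathrm{En}_0^{\mathrm{dict}}\le\mathrm{En}_0$ for every mechanism, follows once I prove that \emph{the total enactment probability of any voting mechanism is at least that of the $\crownset{m}$-dictatorship}, i.e.\ at least $\min\{1,\,p/\crownset{q}\}$.

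The core of the argument is a ``no committee is more demanding than $\crownset{m}$'' lemma: in any voting mechanism, at any interim belief $r\ge\crownset{q}$ there is a continuation equilibrium in which every member acquires no information and the policy is enacted with probability $1$. To verify it I would take the candidate continuation in which all members acquire nothing and cast an enactment-maximising vote (a profile $\Bar{\mathbf v}$ with $D(\Bar{\mathbf v})=1$ exists by assumption), and rule out deviations by an arbitrary member $m_j$. As $r\ge\crownset{q}\ge\Bar{q}_j^*\ge(1+u_j)^{-1}$, at posterior $r$ member $m_j$ weakly prefers enactment, so her vote is optimal. For the information deviation, note that with the others enacting regardless of the state, $m_j$ becomes an ``effective dictator'' who can push the enactment probability down only to some $\underline D_j\ge 0$ rather than to $0$; hence her continuation payoff $\pi_j^{\mathrm{gen}}(q)=\max\{\underline D_j[q(1+u_j)-1],\,q(1+u_j)-1\}$ is pointwise dominated by her dictatorial payoff $\pi_j^{\mathrm{dict}}(q)=\max\{0,\,q(1+u_j)-1\}$, with equality for $q\ge(1+u_j)^{-1}$. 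Because $C_j$ is uniformly posterior separable, ``not acquiring at $r$'' is optimal exactly when $\pi_j-c_j$ coincides with its concave envelope at $r$; Lemma \ref{lem:dictatoreqm} gives this for $\pi_j^{\mathrm{dict}}$ at every $r\ge\Bar{q}_j^*$, and the domination $\pi_j^{\mathrm{gen}}\le\pi_j^{\mathrm{dict}}$ together with the equality $\pi_j^{\mathrm{gen}}(r)=\pi_j^{\mathrm{dict}}(r)$ at $r\ge(1+u_j)^{-1}$ transfers the property to $\pi_j^{\mathrm{gen}}$. So no member deviates, and the candidate is a continuation equilibrium.

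Given this lemma, the lobbyist's guarantee is short. Fixing for every $r$ some continuation equilibrium — the ``all enact'' one for $r\ge\crownset{q}$, any equilibrium otherwise — yields a continuation value $\psi(r)$ with $\psi(\crownset{q})=1$ and $\psi\ge 0$, and the lobbyist's optimal experiment attains the concave envelope $\hat\psi(p)$ (\textcite{kg2011}). The $\crownset{m}$-dictatorship's own experiment (mass $p/\crownset{q}$ on $\crownset{q}$, the remainder on $0$) is feasible and already delivers at least $p/\crownset{q}$, while the belief $p$ delivers $1$ when $p\ge\crownset{q}$; hence the lobbyist-preferred equilibrium of \emph{any} mechanism enacts with total probability at least $\min\{1,\,p/\crownset{q}\}$. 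By the reduction this is precisely (i), so together with (ii) and (iii) the $\crownset{m}$-dictatorship is dominant.

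I expect the main obstacle to be the ``most-demanding'' lemma of the second paragraph — in particular, making rigorous the pointwise domination $\pi_j^{\mathrm{gen}}\le\pi_j^{\mathrm{dict}}$ and the transfer of the concave-envelope no-acquisition condition from the single-agent dictatorship (where it is supplied by Lemma \ref{lem:dictatoreqm} and \textcite{caplin_dean_leahy_2022}) to an arbitrary mechanism, all while the payoff $\pi_j^{\mathrm{gen}}$ is computed against the others' fixed enacting strategies in a simultaneous-move continuation. The reductions in the first and third paragraphs should be routine once the lemma is in place.
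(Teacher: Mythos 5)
Your proposal is correct and follows essentially the same route as the paper: your ``no committee is more demanding than $\crownset{m}$'' lemma is the paper's Claim \ref{thm:1.1}, your reduction of dominance to total enactment probabilities is the paper's reduction, and your lower bound via the lobbyist replaying the $\crownset{m}$-dictatorship's experiment is the paper's closing step. The only (immaterial) difference is that you close the no-acquisition step by pointwise domination $\pi_j^{\mathrm{gen}}\le\pi_j^{\mathrm{dict}}$ with equality at $r$, whereas the paper shows the difference $\Tilde{\pi}_j$ is concave and applies Jensen's inequality --- both rest on the same comparison to the dictatorial payoff supplied by Lemma \ref{lem:dictatoreqm}.
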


Theorem \ref{thm:optimaldictatorship} rests on two fundamental insights:
\begin{enumerate}
    \item The $\crownset{m}$-dictatorship dominates any voting mechanism that is at least as likely to enact the policy as itself.
    \item Any voting mechanism is at least as likely to enact the policy as the $\crownset{m}$-dictatorship.
\end{enumerate}

The first is easy to see. The $\crownset{m}$-dictatorship clearly minimises committee members' information costs: they incur none. While doing so, it also ensures that a good policy is almost surely enacted. 
So, a voting mechanism which has a greater probability of enacting the policy than the $\crownset{m}$-dictatorship must enact bad policies more often. Such a mechanism is clearly dominated by the $\crownset{m}$-dictatorship. 

To see the second, suppose the lobbyist ignores the committee's intricate voting mechanism and conducts the experiment tailored for the $\crownset{m}$-dictatorship. I show that, with this experiment, the committee enacts the policy no less often than the $\crownset{m}$-dictatorship would. Indeed, the $\crownset{m}$-dictatorship enacts the policy if and only if this experiment generates the interim belief $\crownset{q}$. I show that this interim belief persuades committee members to vote for enactment \textit{regardless} of the voting mechanism they use. This is because $\crownset{q}$ (weakly) exceeds any member’s upper threshold of persuasion; if a member were deciding alone, she would enact the policy without acquiring further information. Whenever she believes her peers have no private information relevant to the decision anyway, she wishes to vote the same way.

\begin{proof}[Proof, Theorem \ref{thm:optimaldictatorship}:]
    Fix $\left( \left\{ V_j \right\}_{j=1}^{N}, \left\{ \mathcal{V}_j \right\}_{j=1}^{N}, D \right)$ to be an arbitrary voting mechanism, and let $\crownset{D}$ denote the voting rule for the $\crownset{m}$-dictatorship. 
    
    I will use Claim \ref{thm:1.1} to establish Theorem \ref{thm:optimaldictatorship}:

    \begin{claim}
        \label{thm:1.1}
        Any voting mechanism induces an equilibrium where, in the subgames with interim beliefs $r \geq \crownset{q}$, committee members:
        \begin{enumerate}[label=(\alph*)]
            \item acquire no information: their information strategies $\left\{ \mu_j^* (\cdot \mid r) \right\}_{r \in [0,1]}$ are such that for all $r \geq \crownset{q}$, $\mu_j^* (\cdot \mid r) = \delta_r$.
            \item vote to enact the policy: their voting strategies $\left\{ \tau_j^* (\cdot \mid r,q) \right\}_{(r,q) \in [0,1]^2}$ are such that for all $r \geq \crownset{q}$, $\tau_j^* (\cdot \mid r,r) = \delta_{v_j^*}$ where $D(v_1^*,...,v_N^*) = 1$.
        \end{enumerate}
    \end{claim}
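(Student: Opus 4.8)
The plan is to exhibit, for each subgame with interim belief $r \geq \crownset{q}$, a profile of information and voting strategies satisfying (a)--(b), and to verify that it is a mutual best response; piecing these together with any continuation equilibria in the subgames with $r < \crownset{q}$ (and an optimal lobbyist strategy) then yields a full equilibrium with the asserted structure. Fix once and for all the vote profile $\mathbf{\bar v} = (\bar v_1, \dots, \bar v_N)$ guaranteed by the second requirement on voting mechanisms, so that $D(\mathbf{\bar v}) = 1$. The candidate strategies are: every member acquires no information, $\mu_j^*(\cdot \mid r) = \delta_r$, and at the on-path posterior $q = r$ votes $\bar v_j$. Note the same profile $\mathbf{\bar v}$ serves every $r \geq \crownset q$, as required by (b).

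The first step is a belief-collapse observation. Suppose every $j \neq i$ follows the candidate profile, so $m_i$ correctly believes $\mathbf{q}_{-i} = (r, \dots, r)$. Then each factor $\tfrac{q_j/(1-q_j)}{r/(1-r)}$ in the posterior-odds formula for $\Prob_{r,q_1,\dots,q_N}(\state_1)$ equals $1$ and drops out, so the aggregate belief collapses to $m_i$'s own posterior $q_i$. Consequently, from any posterior $q_i$, voting $v$ yields expected payoff $D(v,\mathbf{\bar v}_{-i})\,\bigl(q_i(1+u_i)-1\bigr)$, and voting optimally gives the continuation value
\[
\Pi_i(q) := \max_{v\in V_i} D(v,\mathbf{\bar v}_{-i})\,\bigl(q(1+u_i)-1\bigr).
\]
Writing $\Pi_i^{\mathrm{dict}}(q) := \max\{0,\, q(1+u_i)-1\}$ for the value $m_i$ would obtain as a dictator (as in Lemma \ref{lem:dictatoreqm}), one checks at once that $\Pi_i(q) = q(1+u_i)-1 = \Pi_i^{\mathrm{dict}}(q)$ for $q \geq (1+u_i)^{-1}$, while $\Pi_i(q) \leq 0 = \Pi_i^{\mathrm{dict}}(q)$ for $q < (1+u_i)^{-1}$, since there the bracket is negative and $D \geq 0$. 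This settles (b): at the on-path posterior $q_i = r \geq \crownset q \geq \Bar{q}_i^* \geq (1+u_i)^{-1}$, enacting is optimal, and because $D(\bar v_i, \mathbf{\bar v}_{-i}) = D(\mathbf{\bar v}) = 1$ is the largest attainable enactment probability, voting $\bar v_i$ is a best response.

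The crux is (a): no acquisition. Given others' strategies, $m_i$'s information problem is $\max_{\mu \in \Delta_r([0,1])} \int \Pi_i \, d\mu - C_i(\mu)$, which by uniform posterior separability is solved by $\delta_r$ precisely when $g_i := \Pi_i - c_i$ coincides with its concave envelope $\hat g_i$ at $r$ (as in \textcite{kg2011}, used already for Lemma \ref{lem:dictatoreqm}). Set $g_i^{\mathrm{dict}} := \Pi_i^{\mathrm{dict}} - c_i$. From $\Pi_i \leq \Pi_i^{\mathrm{dict}}$ we get $g_i \leq g_i^{\mathrm{dict}}$ pointwise, hence $\hat g_i \leq \hat g_i^{\mathrm{dict}}$ (the concave envelope is monotone in the underlying function). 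Meanwhile Lemma \ref{lem:dictatoreqm} tells us the dictator acquires no information for $r \notin (\underaccent{\bar}{q}_i^*,\Bar{q}_i^*)$, so $\hat g_i^{\mathrm{dict}}(r) = g_i^{\mathrm{dict}}(r)$ whenever $r \geq \Bar{q}_i^*$. Since $r \geq \crownset q \geq \Bar{q}_i^* \geq (1+u_i)^{-1}$, and $g_i = g_i^{\mathrm{dict}}$ on $[(1+u_i)^{-1},1]$, we obtain the sandwich $g_i(r) \leq \hat g_i(r) \leq \hat g_i^{\mathrm{dict}}(r) = g_i^{\mathrm{dict}}(r) = g_i(r)$, forcing $\hat g_i(r) = g_i(r)$. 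Thus $\delta_r$ is optimal, giving (a).

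I expect the main obstacle to be exactly this concavification-domination argument, which must transport the dictator's no-acquisition conclusion of Lemma \ref{lem:dictatoreqm} to the committee setting. Two points make it delicate. First, one must justify the belief collapse so that $m_i$'s committee continuation value $\Pi_i$ agrees with the dictator value $\Pi_i^{\mathrm{dict}}$ exactly on the enactment region $[(1+u_i)^{-1},1]$ and lies weakly below it elsewhere; this matching on the relevant half-line, together with global domination, is what the sandwich needs. Second, one must use the correct monotonicity direction of the concave envelope and, crucially, the fact that $\crownset q = \max_j \Bar{q}_j^*$ dominates each member's own threshold $\Bar{q}_i^*$ -- this is precisely why $\crownset q$, rather than any individual threshold, is the right cutoff for the claim. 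The remaining verifications (existence of the maximizing/minimizing votes, and the routine extension to a full perfect Bayesian equilibrium using continuation equilibria for $r < \crownset q$) are standard.
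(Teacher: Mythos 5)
Your proof is correct and follows essentially the same route as the paper's: part (b) is the identical pivotality/belief-collapse argument, and part (a) transports the dictator's no-acquisition conclusion of Lemma \ref{lem:dictatoreqm} using exactly the paper's two pillars --- the committee continuation value is dominated everywhere by the dictator value and agrees with it on the enactment region $[(1+u_i)^{-1},1]$. The only difference is packaging: the paper executes the final step via the additive decomposition $\Tilde{\pi}_i$ and Jensen's inequality, whereas you use monotonicity of concave envelopes and a sandwich at $r$, which is an equivalent (and arguably cleaner) formulation of the same comparison.
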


    \begin{proof}
        Clearly, no strategy profile can leave the lobbyist strictly better off in the relevant subgames. So, it suffices to show that the asserted strategies are a mutual best response for all committee members. 

        Let us first establish part (b).
        Suppose every member except $m_i$ employs the asserted strategies. Define $\pi_i ( r,q )$ to be the payoff $m_i$ expects under this voting mechanism, given her interim belief $r$, posterior belief $q$, others' strategies, and her voting strategy. When $r \geq \crownset{q}$, other committee members acquire no information; so, their votes are not informative about the state. As such, for $r \geq \crownset{q}$:
        \begin{equation*}
            \pi_i (r,q) = \left[ q \times (1 + u_i) - 1 \right]  \times \int\limits_{V_i} D(v, \mathbf{v_{-i}^*}) \tau_i (dv)
        \end{equation*}
        given $m_i$'s chosen voting strategy $\left\{ \tau_i (\cdot \mid r,q) \right\}_{(r,q) \in [0,1]^2}$ and her peers' equilibrium votes, $\mathbf{v_{-i}^*}$.
        When $q_i$ is weakly above (strictly below) $(1 + u_i)^{-1}$,  $\pi_{i} (r,q)$ is strictly increasing (decreasing) in the enactment probability. So, $m_i$ must vote to maximise (minimise) this probability. 
        Since $\crownset{q} \geq \Bar{q}_i^* \geq (1 + u_i)^{-1}$, this implies that $m_i$ cannot profitably deviate from the asserted voting strategies. This proves part (b) of the claim.

        Now, let us prove part (a). The argument above established that, in equilibrium:
        \begin{equation*}
            \pi_i (r,q) =
                \begin{cases}
                    q \times (1 + u_i) - 1 & 
                    q \geq (1 + u_i)^{-1} \\
                    \min\limits_{v_i \in V_i} D(v_i, \mathbf{v_{-i}^*}) \times \left[ q \times (1 + u_i) - 1 \right] & 
                    q < (1 + u_i)^{-1} \\
                \end{cases}
        \end{equation*}
        for any $r \geq \crownset{q}$.
        To prove that acquiring no information is a best response for $m_i$, we must show:
        \begin{align}
            \label{eqn:1}
            &\int\limits_{[0,1]} \left[ \pi_i (q;r) - \pi_i (r;r) \right] d \mu (q) - C_i \left( \mu \right) \leq 0 & &\text{for all } r \geq \crownset{q} \text{ and } \mu \in \Delta_r \left( [0,1] \right)
        \end{align}

        To this end, define $\crownset{\pi}_i (q)$ to be the payoff $m_i$ expects from her preferred action at a posterior belief $q$; $\crownset{\pi}_i (q) := \max \left\{ 0, q \times (1 + u_i) - 1 \right\}$.
        Since $\Bar{q_i}^* \geq \crownset{q}$, $m_i$ acquires no information for interim beliefs $r \geq \crownset{q}$ under  her dictatorship. This implies:
        \begin{align}
            \label{eqn:2}
            &\int\limits_{[0,1]} \left[ \crownset{\pi}_i(q) - \crownset{\pi}_i(r) \right]  d \mu (q) - C_i \left( \mu \right) \leq 0 & &\text{for all } r \geq \crownset{q} \text{ and } \mu \in \Delta_r \left( [0,1] \right)
        \end{align} 
        Now let $\Tilde{\pi}_i (q,r) := \left[ \pi_i (q,r) - \pi_i (r,r) \right] - [ \crownset{\pi}_i (q) - \crownset{\pi}_i (r) ]$. \ref{eqn:1} follows from \ref{eqn:2} once we prove:
        \begin{align}
            \label{eqn:3}
            &\int\limits_{[0,1]} \Tilde{\pi}_i (q,r)  d \mu (q) \leq 0 & &\text{for all }
             r \geq \crownset{q} \text{ and }
            \mu \in \Delta_r \left( [0,1] \right)
        \end{align}
        %

        To see \ref{eqn:3}, first consider the behaviour of $\Tilde{\pi}_i (q,r)$ as a function of $q$, for some fixed $r$. This function is piecewise linear, has a kink at $q = (1 + u_i)^{-1}$, increases before this kink, and decreases after it. Therefore, for any fixed $r$, $\Tilde{\pi}_i (q,r)$ is a concave function of $q$. Moreover, $\Tilde{\pi} (r,r) = 0$, trivially. By Jensen's Inequality, we then obtain:
        \begin{align*}
            &\int\limits_{[0,1]} \Tilde{\pi}_i (q;r)  d \mu (q) \leq \Tilde{\pi}_i (r;r) = 0 &\text{for all } \mu \in \Delta_r \left( [0,1] \right)
        \end{align*}
    \end{proof}

    In the equilibrium the $\crownset{m}$-dictatorship induces, no committee member expects to incur information acquisition costs. Moreover, a good policy is almost surely enacted. Therefore, to prove Theorem \ref{thm:optimaldictatorship}, it suffices to show that any other mechanism has a higher equilibrium probability of enacting a bad policy. 

    In the equilibrium of the $\crownset{m}$-dictatorship, the policy is enacted if and only if the lobbyist's experiment induces the interim belief $\crownset{q}$. Claim \ref{thm:1.1} establishes that this interim belief leads to enactment under any voting mechanism. So, by performing the same experiment, the lobbyist can secure a weakly greater probability of enactment under any such mechanism. This implies that any such mechanism is likelier to enact the policy than the $\crownset{m}$-dictatorship, in equilibrium. Since the $\crownset{m}$-dictatorship never blocks a bad policy, any such mechanism must be likelier to enact a bad policy, too.
\end{proof}

Theorem \ref{thm:optimaldictatorship} aligns with the findings of \textcite{alonso_camara_2016} and \textcite{bardhi_guo_2018}. These papers also feature a lobbyist\footnote{\textit{The politician} in the former, \textit{the Sender} in the latter.} who wishes to persuade the committee to enact a policy by designing an experiment about its unknown benefit. However, their committee members have no access to private information. They find that, when committee members' preferences mirror those considered in this paper, every committee member prefers \textit{unanimity rule} to other $k$-majority rules---mechanisms which enact the policy if at least $k$ members vote ``yes''. This is because, under unanimity, the most-demanding member\footnote{
    In their setting, committee members' upper thresholds of persuasion coincide with their thresholds of indifference, since they cannot acquire any private information. Then, the most-demanding member is simply the one who is least biased towards the policy; i.e., with the lowest $u_i$. 
} becomes the \textit{de facto} dictator: others never vote ``no'' when she votes ``yes'', and when she votes ``no'', the policy is blocked. To persuade her, the lobbyist provides the committee more information than he would under any other $k$-majority rule. As such, the committee makes the best decisions under unanimity.

Proposition \ref{prop:dict_mostinfo} establishes an analogous result for our setting.

\begin{prop}
    \label{prop:dict_mostinfo}
    Let $\varepsilon$ and $\crownset{\varepsilon}$ be two Blackwell experiments inducing the distributions $\mu$ and $\crownset{\mu}$ over interim beliefs, where $\supp \crownset{\mu} = \{ 0, \crownset{q} \}$. If $\varepsilon$ is Blackwell more informative than $\crownset{\varepsilon}$, the lobbyist expects a lower equilibrium payoff with $\mu$ than with $\crownset{\mu}$ under any voting mechanism. 
\end{prop}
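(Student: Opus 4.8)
The plan is to reduce the lobbyist's payoff to a single integral against the induced belief distribution, and then exploit the fact, guaranteed by Claim \ref{thm:1.1}, that this integrand attains its maximum at $\crownset{q}$. Fix the voting mechanism and define, for each interim belief $r$, the quantity $V(r)$ to be the probability the policy is enacted in the lobbyist-preferred continuation equilibrium of the subgame beginning at interim belief $r$. Since the continuation game following the lobbyist's experiment depends only on the realised interim belief and not on which experiment produced it, $V$ is one and the same function whether the lobbyist runs $\varepsilon$ or $\crownset{\varepsilon}$; the lobbyist's equilibrium payoff is then $\int_{[0,1]} V\,d\mu$ in the first case and $\int_{[0,1]} V\,d\crownset{\mu}$ in the second. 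The goal thus becomes the inequality $\int V\,d\mu \le \int V\,d\crownset{\mu}$. I would record two facts about $V$: trivially $V(r)\le 1$ for every $r$, since $V$ is a probability; and, by Claim \ref{thm:1.1}, at any interim belief $r\ge\crownset{q}$ every member acquires no information and votes to enact, so $V(\crownset{q})=1$, i.e. $\crownset{q}$ is a point where $V$ attains its global maximum.

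Next I would translate the Blackwell ranking into a statement about the belief distributions. Because $\varepsilon$ is Blackwell more informative than $\crownset{\varepsilon}$ and both are Bayes-plausible with mean $p$, $\mu$ is a mean-preserving spread of $\crownset{\mu}$, so there is a martingale coupling $(X,Y)$ with $X\sim\mu$, $Y\sim\crownset{\mu}$, and $\Exp[X\mid Y]=Y$. As $Y$ is supported on $\{0,\crownset{q}\}$ with $\crownset{\mu}(\{\crownset{q}\})=p/\crownset{q}$, the event $\{Y=0\}$ forces $\Exp[X\mid Y=0]=0$ with $X\in[0,1]$, hence $X=0$ almost surely there: the atom at the extreme belief $0$ cannot be spread. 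Writing $\nu$ for the law of $X$ given $Y=\crownset{q}$ (a distribution on $[0,1]$ with mean $\crownset{q}$), this yields the decomposition $\mu=(1-p/\crownset{q})\,\delta_0+(p/\crownset{q})\,\nu$, against $\crownset{\mu}=(1-p/\crownset{q})\,\delta_0+(p/\crownset{q})\,\delta_{\crownset{q}}$.

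The comparison is then immediate: the $\delta_0$ terms enter with the identical weight $1-p/\crownset{q}$ and cancel, so that
\[
\int V\,d\mu-\int V\,d\crownset{\mu}=\frac{p}{\crownset{q}}\left(\int V\,d\nu-V(\crownset{q})\right)=\frac{p}{\crownset{q}}\left(\int V\,d\nu-1\right)\le 0,
\]
where the last inequality uses $V\le 1$, whence $\int V\,d\nu\le1$. Thus the lobbyist's payoff under the more informative $\mu$ is weakly lower, which is the claim.

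The one step requiring genuine care is the decomposition: arguing that passing to a Blackwell-more-informative experiment can only redistribute the mass $\crownset{\mu}$ places at $\crownset{q}$ while leaving the mass at the extreme belief $0$ untouched, so that $V(0)$ drops out of the comparison. Everything else follows from $V(\crownset{q})=1$ together with $V\le 1$; in particular the argument never needs to compute $V(0)$ (which may be positive, owing to uninformative voting equilibria at a belief that reveals the policy to be bad) nor to establish any concavity of $V$. I would also take care to justify that the lobbyist's payoff genuinely factorises as $\int V\,d\mu$ — i.e. that the lobbyist-preferred continuation value at an on-path interim belief is intrinsic to that subgame and independent of the overall experiment — since this is what makes $V$ a common object across the two regimes being compared.
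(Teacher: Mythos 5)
Your proposal is correct and follows essentially the same route as the paper's own proof: both translate the Blackwell ranking into a martingale coupling of $\mu$ and $\crownset{\mu}$, observe that the atom at the degenerate belief $0$ cannot be spread, and conclude that redistributing the mass at $\crownset{q}$ cannot help because Claim \ref{thm:1.1} already guarantees enactment with probability one there. Your explicit decomposition of $\mu$ and the remark that the argument never needs to pin down the continuation value at belief $0$ are slightly more careful renderings of the same idea, not a different argument.
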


\begin{proof}
    By Blackwell's Theorem (\textcite{blackwell_53}),\footnote{
        The reader may note that \textcite{blackwell_53} states the equivalence only for the standard measure of an experiment; i.e., the distribution of interim beliefs under the prior $p = 1/2$, here. However, it is easily confirmed that \textcite{blackwell_53}'s proof for the ``only if'' direction holds for any the interim belief distribution generated by \textit{any} prior $p \in [0,1]$.
    } if $\varepsilon$ is Blackwell more informative than $\crownset{\varepsilon}$, then for any convex function ${\phi}$ over $\mathbb{R}$:
    \begin{equation*}
        \int \phi(r) d\mu(r) \geq \int \phi(r) d \crownset{\mu}(r)
    \end{equation*}
    By Theorem 8 in \textcite{strassen_1965}, this condition implies the existence of a belief martingale $\{ \xi_1, \xi_2 \}$ such that ${\xi_1} \sim \crownset{\mu}$ and ${\xi_2} \sim \mu$. Since $\{ \xi_1, \xi_2 \}$ is a belief martingale, $\Prob \left( \xi_2 = 0\mid \xi_1 = 0 \right) = 1$. Then, $\xi_2 \neq \xi_1$ only when $\xi_1 = \crownset{q}$. But by Claim \ref{thm:1.1}, the interim belief $\crownset{q}$ leads to the enactment of the policy under any voting mechanism. 
    So, the interim belief martingale $\xi_2$ can never increase the lobbyist's expected payoff above that induced by $\xi_1$.

\end{proof}

Proposition \ref{prop:dict_mostinfo}, however, only provides a partial intuition for Theorem \ref{thm:optimaldictatorship}. Even though other mechanisms extract less information from the lobbyist than the $\crownset{m}$-dictatorship does, they may incentivise committee members to acquire more information. By doing so, they may help the committee accumulate more information overall, leading to better decisions. 

What drives Theorem \ref{thm:optimaldictatorship} is that the lobbyist \textit{always} prevents this. He can always do so with the same experiment he would perform for a $\crownset{m}$-dictatorship. That experiment discourages committee members from acquiring further information and persuades them to enact the policy as frequently as the $\crownset{m}$-dictatorship would, regardless of the voting mechanism they use. He always wishes to do so, too. A $\crownset{m}$-dictatorship never blocks a good policy; so, the committee can improve its decisions only by blocking bad policies more often. So, the lobbyist never wishes the committee to take better decisions than the $\crownset{m}$-dictatorship would.

\section{Discussion}
\label{section:discussion}

In this Section, I conclude by discussing the robustness of Theorem \ref{thm:optimaldictatorship} to various modelling assumptions. 

\subsection*{Simplifying Assumptions}

The model assumes that committee members' experiments are mutually independent conditional on the state. This means that committee members cannot directly gather information on their peers' private signals; they can only do so indirectly by gathering information about the unknown state that generates those signals. It also assumes that the lobbyist can only provide \textit{public} information---send the same information to all committee members. It does not explicitly cover the case of \textit{private} information provision, where the lobbyist may be allowed to communicate different bits of information to each committee member. 

While both restrictions streamline exposition, neither affects the results presented in the paper. Clearly, neither restriction is relevant for how dictatorships behave; hence, for Lemma \ref{lem:dictatoreqm}. Theorem \ref{thm:optimaldictatorship} and Proposition \ref{prop:dict_mostinfo} rest on the key insight that the information that persuades the $\crownset{m}$-dictatorship persuades any other committee, too: it discourages any such committee from acquiring further information, and ensures that it enacts the policy at least as often as the $\crownset{m}$-dictatorship. This insight hinges only on committee members' cost of acquiring information about the unknown state $\state$ and their preferences about the policy; it is not sensitive to the removal of the aforementioned restrictions on the model. Thus, the results in this paper can easily be extended to a setting where committee members can acquire information about each others' signals, and the lobbyist can communicate privately with each member.

\subsection*{Equilibrium Selection}

I focus on the lobbyist-preferred subgame perfect equilibria that different voting mechanisms induce. Under some voting rules---for instance, simple majority---this may lead to odd voting strategies. Particularly, committee members may vote to enact the policy regardless of their beliefs and preferences, simply because each believes that her vote will not be pivotal. To avoid such strategies, one may refine the set of equilibria further; for instance, by excluding equilibria in weakly dominated voting strategies (e.g., like \textcite{alonso_camara_2016} and \textcite{heese_lauermann_2024}). 

Such refinements on voting strategies have no bearing on the results presented here. They do not affect the equilibria dictatorships induce: dictators always vote for their preferred outcome. They do not affect the key insight behind Theorem \ref{thm:optimaldictatorship} or Proposition \ref{prop:dict_mostinfo} either: at the interim belief $\crownset{q}$ that persuades $\crownset{m}$ to enact the policy, any committee member prefers to see the policy enacted rather than blocked. So, information that persuades the $\crownset{m}$-dictatorship continues to persuade any other committee.

Nonetheless, the information strategies my equilibrium concept selects do play a key role in establishing Theorem \ref{thm:optimaldictatorship} and Proposition \ref{prop:dict_mostinfo}. Following the interim belief $\crownset{q}$, no committee member acquires additional information---she instead votes to enact the policy. This information strategy is a best response given her belief that, in equilibrium, her peers will not acquire any private information either. Since these strategies secure the lobbyist's favourite outcome, they constitute the lobbyist-preferred subgame perfect equilibrium.

This, however, does not preclude subgame perfect equilibria where the committee \textit{does} acquire private information following the interim belief $\crownset{q}$. Indeed, consider a committee member who instead believes that her peers {will} acquire additional private information at this interim belief. She no longer believes that the committee will decide based on the interim belief $\crownset{q}$; her peers may grow less certain that the policy was good, and vote against it. If they hold such private information, should she vote to allow her peers to block the policy? Or, should she counteract them, possibly reassured by her positive private information? To respond to this contingency, she may now find it useful to acquire private information herself.

This incentive to acquire information in order to best respond to others' renewed uncertainty mirrors the \textit{encouragement effect}, first explored by \textcite{bolton_harris_1999}. Unlike their setting, committee members must \textit{simultaneously} decide whether to acquire information, \textit{before} they discover what their peers know. Nonetheless, an analogue of the encouragement effect may still emerge, thanks to members' incentives to best respond to their peers' informed votes.

My equilibrium selection rule precludes such equilibria---the lobbyist is best off when committee members vote in favour of the policy without acquiring further information. Studying when these equilibria emerge and the outcomes they lead to is itself a very interesting and promising line of inquiry, left to future research.

\subsection*{The Lobbyist's Information Production Costs}

The model assumes that the lobbyist can produce information freely---he can \textit{costlessly} choose \textit{any} Blackwell experiment about the state. This assumption has become standard and widespread in the literature following \textcite{kg2011}'s seminal paper. Nonetheless, it is a very powerful assumption, and a key driver of the results presented in this paper. To illustrate how it shapes the results, I extend the framework this paper analysed by constraining the lobbyist's ability to produce information for the committee. 

Let $E$ be a finite set of Blackwell experiments about the unknown state $\state$, and $\Delta_{E} \subset \Delta_p ([0,1])$ be the set of interim belief distributions some experiment in $E$ may generate. Suppose that the lobbyist's information strategy $\mu_l$ must now be an element of this set $\Delta_{E}$, rather than any Bayes plausible distribution $\mu \in \Delta_p ([0,1])$. Otherwise, assume that the setting remains the same.

\begin{prop}
    \label{prop:constrainedlobbyist}
    Suppose that $\crownset{\mu} \in \Delta_E$, where $\supp \crownset{\mu} = \{ 0, \crownset{q} \}$. Then, Lemma \ref{lem:dictatoreqm}, Theorem \ref{thm:optimaldictatorship} and Proposition \ref{prop:dict_mostinfo} remain valid. 
\end{prop}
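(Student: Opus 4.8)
The plan is to exploit a single structural observation: constraining the lobbyist to $\Delta_E$ only shrinks his feasible set, leaving every committee member's best-response problem untouched, while the hypothesis $\crownset{\mu}\in\Delta_E$ guarantees that the one experiment on which all three results hinge remains available to him. I would therefore show that each result reuses only ingredients that are insensitive to the lobbyist's feasible set, supplemented by the feasibility of $\crownset{\mu}$.

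First, for Lemma \ref{lem:dictatoreqm}, I would note that parts (1), (2), and (4) describe members' best responses to a given distribution of interim beliefs, determined solely by their own preferences and uniformly posterior separable costs; nothing in their derivation refers to which experiments the lobbyist may run, so they carry over verbatim. Only part (3)---the lobbyist's strategy---interacts with the constraint, and here I would restrict attention to the $\crownset{m}$-dictatorship, the only dictatorship the subsequent results invoke. There the lobbyist maximises $\Exp_{\mu}\,\zeta^*_{\crownset{m}}$, whose unconstrained maximum equals the concave-envelope value $\hat{\zeta}^*_{\crownset{m}}(p)$ and is attained by $\crownset{\mu}$ with $\supp\crownset{\mu}=\{0,\crownset{q}\}$. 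Since $\Delta_E\subseteq\Delta_p([0,1])$ and $\crownset{\mu}\in\Delta_E$, the constrained maximum cannot exceed $\hat{\zeta}^*_{\crownset{m}}(p)$ yet is still attained by $\crownset{\mu}$; hence $\crownset{\mu}$ remains optimal and part (3) holds for the $\crownset{m}$-dictatorship.

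Next, for Theorem \ref{thm:optimaldictatorship} and Proposition \ref{prop:dict_mostinfo}, I would re-examine their common engine, Claim \ref{thm:1.1}. Its proof rests on inequalities \ref{eqn:2} and \ref{eqn:3}, which compare the value to $m_i$ of acquiring an experiment $\mu$ against acquiring nothing, phrased entirely through her preferred-action payoff $\crownset{\pi}_i$ and her cost $C_i$. These are statements about $m_i$'s own information-acquisition problem and make no reference to the lobbyist's feasible set, so Claim \ref{thm:1.1} holds unchanged: at any interim belief $r\geq\crownset{q}$, every member acquires no information and votes to enact, under any voting mechanism. Given this, Theorem \ref{thm:optimaldictatorship} follows as before---by the updated Lemma \ref{lem:dictatoreqm} the $\crownset{m}$-dictatorship still enacts every good policy and imposes no information cost, while under an arbitrary mechanism the lobbyist may now legitimately deploy $\crownset{\mu}\in\Delta_E$ and, by Claim \ref{thm:1.1}, secure an enactment probability at least that of the $\crownset{m}$-dictatorship; so the lobbyist-preferred equilibrium does at least as well, and the same accounting (good policies enacted with probability one under the $\crownset{m}$-dictatorship) shows any mechanism enacts bad policies at least as often. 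For Proposition \ref{prop:dict_mostinfo}, the Blackwell--Strassen construction of the martingale $\{\xi_1,\xi_2\}$ depends only on the two distributions $\mu,\crownset{\mu}$, and the conclusion uses only that $\crownset{q}$ leads to enactment (Claim \ref{thm:1.1}); both are unaffected by the constraint.

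The one delicate step, and the main obstacle, is the optimality claim in part (3) of Lemma \ref{lem:dictatoreqm} under the shrunken feasible set. The argument is immediate once phrased correctly---constraining can only lower the achievable maximum, while $\crownset{\mu}$ already attains the \emph{unconstrained} maximum and remains feasible---but this is precisely where the hypothesis $\crownset{\mu}\in\Delta_E$ is indispensable. I would also flag a genuine caveat: for a member $m_i$ who is not most-demanding, the ideal experiment $\{0,\Bar{q}_i^*\}$ need not lie in $\Delta_E$, so part (3) may fail for that member's dictatorship; this is harmless, since every result above invokes only the $\crownset{m}$-dictatorship and the experiment $\crownset{\mu}$. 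Finally, finiteness of $E$ (hence of $\Delta_E$) ensures the lobbyist's constrained problem attains a maximum, so equilibrium existence and the lobbyist-preferred selection are preserved.
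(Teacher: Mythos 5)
Your proposal is correct and follows essentially the same route as the paper's own proof: Proposition \ref{prop:dict_mostinfo} is unaffected, $\crownset{\mu}$ remains optimal under the shrunken feasible set because it attains the unconstrained maximum and stays feasible, and Claim \ref{thm:1.1} together with members' best responses carries Theorem \ref{thm:optimaldictatorship} through unchanged. Your caveat that part (3) of Lemma \ref{lem:dictatoreqm} may fail for a non-most-demanding member's dictatorship is a point the paper implicitly sidesteps by restricting to the $\crownset{m}$-dictatorship, and you correctly observe it is harmless.
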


\begin{proof}
    Proposition \ref{prop:dict_mostinfo} trivially remains valid, as it does not depend on the set of experiments available to the lobbyist.

    Lemma \ref{lem:dictatoreqm} establishes that under the $\crownset{m}$-dictatorship, no strategy $\mu_l \in \Delta_{p} ([0,1])$ delivers a greater equilibrium expected payoff than $\crownset{\mu}$ for the lobbyist. Therefore, $\crownset{\mu}$ remains optimal for the lobbyist when he is restricted to strategies $\mu \in \Delta_E$. As such, Lemma \ref{lem:dictatoreqm}'s characterisation of the equilibrium under the $\crownset{m}$-dictatorship remains valid. Claim \ref{thm:1.1} and the ensuing argument used to establish Theorem \ref{thm:optimaldictatorship} only rely on Lemma \ref{lem:dictatoreqm}'s characterisation of the equilibrium under a $\crownset{m}$-dictatorship, the lobbyist's ability to employ the strategy $\crownset{\mu}$, his preferences regarding policy outcomes, and committee members' best responses. Therefore, Theorem \ref{thm:optimaldictatorship} also remains valid.  
    
\end{proof}

When $\crownset{\mu} \notin E$, however, Proposition \ref{prop:constrainedlobbyist} fails. I show this through a simple counter-example. 

Suppose there are two committee members, $m_1$ and $m_2$, who share the prior belief $p = 0.5$ with the lobbyist. $m_1$ and $m_2$ receive the payoffs $u_1 = 1/4$ and $u_2 = 2/3$ when a good policy is enacted. Thus, their thresholds of indifference are $(1 + u_1)^{-1} = 0.8$ and $(1 + u_2)^{-1} = 0.6$, respectively. Define $\pi_i(q)$ to be the payoff member $m_i$ expects when her preferred decision is taken at posterior belief $q$:
\begin{align*}
    \pi_1(q) &:=
    \begin{cases}
        0 & q < 0.8 \\
        5/4 \times (q - 0.8) & q \geq 0.8
    \end{cases}
    &
    \pi_2(q) &:=
    \begin{cases}
        0 & q < 0.6 \\
        5/3 \times (q - 0.6) & q \geq 0.6
    \end{cases}
\end{align*}
Let $m_i$'s cost of acquiring information be defined by the convex function $c_i (q) := q^2 + \pi_i (q)$. 

In turn, suppose that the lobbyist's information strategy must be an element of $\Delta_E = \{ \mu_1, \mu_2 \}$, where $\mu_1$ and $\mu_2$ are the unique distributions such that $\mu_1, \mu_2 \in \Delta_{0.5} ([0,1])$, $\supp \mu_1 = \{ 0.2, 0.8 \}$, and $\supp \mu_2 = \{ 0, 0.6 \}$. 

In this counter example, $m_1$ is the most-demanding member of the committee: her upper threshold of persuasion equals $\Bar{q}_1^* = 0.8$ against $m_2$'s $\Bar{q}_2^* = 0.6$. Yet, the $m_1$-dictatorship does not dominate the $m_2$-dictatorship. 

To see this, first note that neither member finds it optimal to acquire any information under her dictatorship; both members have the equilibrium information strategy $\{ \mu_i^* (\cdot \mid r) : \mu_i^* ( \cdot \mid r ) = \delta_r \text{ for all } r \in [0,1] \}$. This is because $m_i$'s equilibrium information strategy must solve the following problem:
\begin{equation*}
    \mu_i^* ( \cdot \mid r ) \in \argmax\limits_{\mu \in \Delta_r ([0,1])} \int\limits_{[0,1]} \left[ \pi_i (q) - c_i(q) \right] d \mu (q) - [\pi_i (r) - c_i (r)] 
\end{equation*}
The function $\pi_i (q) - c_i (q) = - q^2$ is strictly concave throughout its domain. Due to Jensen's Inequality, this implies that $\mu_i( \cdot \mid r ) = \delta_r$ is the unique solution to this problem.

Thus, both members' upper thresholds of persuasion coincide with their thresholds of indifference: $ \Bar{q}_i^* = (1 + u_i)^{-1} $. Under the $m_1$-dictatorship, the strategy $\mu_2$ never allows the lobbyist to enact the policy---its support lies strictly below $m_1$'s upper threshold of indifference. In contrast, the strategy $\mu_1$ enacts the policy when the interim belief $r = 0.8$ is realised. Hence, $\mu_1$ is the lobbyist's equilibrium strategy under $m_1$'s dictatorship. This dictatorship achieves the following conditional probabilities:
\begin{align*}
    \Prob_{\mu_1} ( \text{enact} \mid \state = \state_1 ) &= \Prob_{\mu_1} ( r = 0.8 \mid \state = \state_1 ) = 4/5
    \\
    \Prob_{\mu_1} ( \text{block} \mid \state = \state_0 ) &= \Prob_{\mu_1} ( r = 0.2 \mid \state = \state_0 ) = 4/5
\end{align*}

In contrast, $\mu_2$ is the optimal strategy under the $m_2$-dictatorship. This strategy enacts the policy when the interim belief $r = 0.6$ is realised. This occurs with probability $5/6$, as one may easily verify using Bayes Rule. In contrast, $\mu_1$ enacts the policy only when the interim belief $r = 0.8$ is realised. This occurs with probability $1/2$. Hence, $\mu_2$ is the lobbyist's equilibrium strategy under $m_2$'s dictatorship. This dictatorship achieves the following conditional probabilities:

\begin{align*}
    \Prob_{\mu_2} ( \text{enact} \mid \state = \state_1 ) &= \Prob_{\mu_2} ( r = 0.6 \mid \state = \state_1 ) = 1
    \\
    \Prob_{\mu_2} ( \text{block} \mid \state = \state_0 ) &= \Prob_{\mu_2} ( r = 0 \mid \state = \state_0 ) = 1/3
\end{align*}

This verifies that $m_1$'s dictatorship does not dominate $m_2$'s dictatorship, and vice versa. Under $m_2$'s dictatorship, the lobbyist faces a lower upper threshold of persuasion than under $m_1$'s dictatorship. As such, he tailors the information he provides to $m_2$ so that she is likelier to enact the policy than $m_1$ is. Particularly, $m_2$ becomes likelier to enact a bad policy than $m_1$ is. However, she also becomes likelier to enact a good policy. Such a trade-off never emerges if the lobbyist can generate \textit{any} interim belief distribution $\mu \in \Delta_p ([0,1])$. He then ensures that dictators always enact a good policy.

\newpage

\printbibliography

\end{document}